\title[Adaptive LQR for Linear Systems with Unknown Time-Varying Parameters]{Stability of Certainty-Equivalent Adaptive LQR for Linear Systems with Unknown Time-Varying Parameters}
\newtheorem{assumption}{Assumption}
\newcommand{\ubar}[1]{\text{\b{$#1$}}}
\begin{document}

\maketitle

\begin{abstract}%

Standard model-based control design deteriorates when the system dynamics change during operation.
To overcome this challenge, online and adaptive methods have been proposed in the literature. In this work, we consider the class of discrete-time linear systems with unknown time-varying parameters. We propose a simple, modular, and computationally tractable approach by combining two classical and well-known building blocks from estimation and control: the least mean square filter and the certainty-equivalent linear quadratic regulator. Despite both building blocks being simple and off-the-shelf, our analysis shows that they can be seamlessly combined to a powerful pipeline with stability guarantees. Namely, finite-gain $\ell^2$-stability of the closed-loop interconnection of the unknown system, the parameter estimator, and the controller is proven, despite the presence of unknown disturbances and time-varying parametric uncertainties. Real-world applicability of the proposed algorithm is showcased by simulations carried out on a nonlinear planar quadrotor.%
\end{abstract}

\begin{keywords}%
adaptive control, parametric uncertainty, online learning for control, unknown systems, linear time-varying systems, linear quadratic regulator%
\end{keywords}

\section{Introduction}

Many successful control designs, such as the linear quadratic regulator (LQR)~\citep{anderson2007optimal}, or model predictive control~\citep{rawlings2020model}, require a known model of the controlled system. However, in real world scenarios, it is often cumbersome and expensive to derive accurate models from first principles.
\par Data-driven methods~\citep{hou2013model} have been proposed to address this issue, leveraging an offline-collected dataset to either directly synthesize the controller~\citep{de2019formulas,van2025data}, or indirectly, via a system identification step~\citep{ljung1998system,van2012subspace}. To account for the remaining model uncertainty, usually a robust control approach is considered~\citep{berberich2020data, van2020noisy}, which is only feasible if the error is small. In addition, offline designs fail if the system behavior changes online, or if no stable open-loop experiments can be conducted. To overcome these challenges, online and adaptive algorithms are needed that update the model and the policy to improve performance over time. 
\par Reinforcement learning~\citep{sutton1998reinforcement} is another popular approach to address the lack of an accurate model. These methods aim to learn an optimal policy via repeated interaction with the environment, usually in an episodic and simulated manner, leveraging the availability of a vast collection of data and increasing computational power. 
However, the vast majority of reinforcement learning methods work offline~\citep{levine2020offline}, rendering them incapable of adapting to online changes in the system dynamics.
Furthermore, they generally only achieve the desired guarantees  \textit{after training}, whereas failure (e.g. in terms of instability or arbitrarily large cost) is admissible \textit{during training}. Transferring these techniques to the online setting where learning happens through real-world experiments is challenging, as in this case transient behavior is important, and resetting the experiments at the start of every episode is not always possible.
\par In this work, we address the optimal control of discrete-time linear systems, when  the model parameters are unknown and time-varying. 

\subsection{Related Work}
\hspace{16pt}\textbf{Classical adaptive control:} The well-established field of adaptive control~\citep{astrom1994adaptive, krstic1995nonlinear, narendra2012stable} aims to stabilize systems with parametric uncertainties by continuously adapting the controller. Limitations include the fact that the parameter adaptation rule and the control policy have to be simultaneously designed and tailored to specific, often restrictive problem classes (e.g., assuming that the system is minimum phase). Additionally, optimality of the controller is rarely an objective of classical adaptive control, and most results are limited to continuous-time systems. 
For example, discrete-time gradient descent requires a good choice of the step size, unlike gradient flows in continuous time.
\par \textbf{Online learning the LQR:} In the past decade there has been an increased interest in applying online learning techniques~\citep{shalev2012online} to learn the LQR for linear time-invariant (LTI) systems. These include methods from the optimism in the face of uncertainty framework~\citep{abbasi2011regret, cohen2019learning, bartos2026optimistic};
policy gradient methods~\citep{fazel2018global, zhao2025policy}; certainty-equivalent control~\citep{wang2021exact}; and policy iteration~\citep{song2024role}. The methods of~\citet{abbasi2011regret, wang2021exact, bartos2026optimistic} focus on deriving regret bounds, while~\citet{cohen2019learning,song2024role,zhao2025policy} also certify stability of the closed loop. However, with the exception of~\citet{zhao2025policy, wang2021exact}, they carry out updates of the policy in an episodic manner, i.e., only after some (usually increasing) number of time steps. The methods in~\citet{zhao2025policy, wang2021exact} require access to a known stabilizing initial controller and work by injecting exploratory noise into the control input to ensure that the data is persistently exciting~\citep{narendra2012stable,willems2005note}. Finally, a common limitation of all these methods is that they consider the time-invariant case, i.e., they cannot handle online-changing parameters.
\par \textbf{Online learning in the time-varying case:} A few recent works have addressed the control problem of time-varying systems from the online learning perspective. The methods proposed in~\citet{qu2021stable,lin2021perturbation} come with performance guarantees, but require that the true parameter value is revealed to them every time step before selecting the control input. The methods of~\citet{minasyan2021online, gradu2023adaptive} assume that the system is open-loop stable, and~\citet{noori2025data} only consider the finite-horizon problem. Under the assumption of infrequently changing or slowly drifting dynamics, the method of~\citet{yu2023online} guarantees stability of the closed-loop system, however, it requires solving a second-order cone program every time step, where the number of constraints grows linearly with time, limiting its real-world applicability.

\subsection{Contributions}
We propose a simple, modular, and computationally tractable approach by combining two classical and well-known building blocks from estimation and control: the least mean square filter (LMS)~\citep{astrom1994adaptive} and the certainty-equivalent LQR. Although the LMS and the LQR are widely used, stability guarantees for corresponding adaptive controllers for time-varying parameters are not available in the literature.
\par In contrast with the aforementioned online-learning-based methods, the proposed method is applicable to linear systems with continuously changing parameters and operates in a non-episodic manner, i.e., by adapting the policy every time step. Moreover, unlike classical adaptive control, our approach is modular, general, uses off-the-shelf building blocks, and addresses the discrete-time setting.
\par Despite the simplicity of the proposed approach, robust stability of the closed-loop interconnection is proven, even without assuming that the data is persistently exciting.
In the case of persistently exciting data, simulation results suggest that the proposed algorithm is also able to learn the model and achieve optimality in the asymptotic sense.

\subsection{Outline and Notation}
\hspace{16pt} \textbf{Outline:}
Section~\ref{sec:problem_setup} defines the considered problem setting, and Section~\ref{sec:proposed_method} presents the proposed method as well as its stability analysis. 
Section~\ref{sec:numerical_example} contains simulation results 
on a planar nonlinear quadrotor model. 
Section~\ref{sec:conclusion} concludes the paper and provides an outlook on future directions.
\par \textbf{Notation:} Let $\mathbb{R}$ and $\mathbb{N}$ denote the set of real and nonnegative integer numbers, respectively. $I_n$ represents the $n\times n$ identity matrix, positive (semi-)definiteness of a matrix $Q$ is indicated by $Q\succ(\succeq)\ 0$, and the Kronecker product of two matrices is denoted by $\otimes$.
The 2-norm of vector $x$ is denoted by $\|x\|$, while $\|x\|_P := \sqrt{x^\top P x}$ for $P = P^\top \succ 0$. For matrix $B \in \mathbb{R}^{n\times m}$, $\|B\| $ denotes its spectral norm, and its vectorization is denoted by $\mathrm{vec}(B) \in \mathbb{R}^{nm}$, which is obtained by stacking the columns of $B$ on top of one another. For vectors $a \in \mathbb{R}^n, b\in\mathbb{R}^m$, we define $(a,b) := [a^\top\ b^\top]^\top \in \mathbb{R}^{n+m}$. For a sequence $\{a_0,a_1,...\} = \{ a_k\}_{k=0}^\infty = \{a\}$, we use $\{a\} \in \ell^1$ or $\{a\} \in \ell^2$ to denote that it is summable or square-summable, respectively.

\section{Problem Setup} \label{sec:problem_setup}

Consider the discrete-time linear time-varying (LTV) system
\begin{equation} \label{eq:System}
    x_{k+1} = A(\theta_k) x_k + B(\theta_k) u_k + w_k,
\end{equation}
with state $x_k \in \mathbb{R}^n$, input $u_k \in \mathbb{R}^m$, and unknown disturbance $w_k \in \mathbb{R}^n$ at time step $k \in \mathbb{N}$, and initial condition $x_0 \in \mathbb{R}^n$. We assume that the state can be measured directly 
and that the matrices $A: \mathbb{R}^p \rightarrow \mathbb{R}^{n \times n}$ and $B: \mathbb{R}^p \rightarrow \mathbb{R}^{n \times m}$ are known
functions of the unknown time-varying parameter $\theta_k \in \mathbb{R}^p$. Let $\Delta\theta_k:=\theta_{k+1} - \theta_k$.

\begin{assumption}[Bounded disturbances] \label{ass:W}
    The disturbances are uniformly bounded, i.e., $\exists \bar{W} \geq 0$ such that $\|w_k\| \leq \bar{W},\ \forall k \in \mathbb{N}$.
\end{assumption}
\begin{assumption}[Parameters] \label{ass:Theta}
    The parameter vector is contained in a known compact convex set: $\theta_k \in \Theta,\ \forall k \in \mathbb{N}$. The set $\Theta$ is pointwise stabilizable, i.e., $\forall \theta \in \Theta$, the pair $(A(\theta),B(\theta))$ is stabilizable. The matrices $A(\theta), B(\theta)$ depend affinely on $\theta$.
\end{assumption}

Note that the pointwise stabilizability condition in Assumption~\ref{ass:Theta} is significantly weaker than assuming the existence of a single controller that simultaneously stabilizes every element of $\Theta$.
\par Our primary goal is to control system~\eqref{eq:System} while ensuring $\ell^2$-stability of the closed-loop system with respect to the disturbances $w_k$, despite the presence of unknown time-varying parameters $\theta_k$. Definition~\ref{def:l2} provides the exact definition of $\ell^2$-stability.

\begin{definition} \textbf{(Finite-gain $\ell^2$-stability, adapted from~\citet[Definition~5.1]{khalil2002nonlinear})} \label{def:l2}
    Closed-loop system~\eqref{eq:System} under a controller $u_k = \pi_k(x_k)$ is finite-gain $\ell^2$-stable with respect to the disturbance signal $\{w\}$, if $\exists c_1,c_2 \geq 0$ such that
    \begin{equation*}
        \sum_{k=0}^T \| x_k\|^2 \leq c_1 \sum_{k=0}^T \| w_k\|^2 + c_2, \quad \forall T \in \mathbb{N},\ \forall \{w\} \in \ell^2.
    \end{equation*}
\end{definition}
If the closed-loop system is finite-gain $\ell^2$-stable, then $\lim_{k\rightarrow\infty} x_k = 0$, as long as $\{w\} \in \ell^2$.
In our analysis, the constant $c_2$ in Definition~\ref{def:l2} will depend on the parameter variation signal $\{ \Delta\theta\}$.
Additionally, the policy is to be adapted with the aim to minimize the asymptotic average cost $ \lim_{T \rightarrow \infty} \frac{1}{T}\sum_{k=0}^T (x_k^\top Q x_k + u_k^\top R u_k)$, 
where $ Q,R \succ 0$.

\section{Proposed Method} \label{sec:proposed_method}

The proposed method consists of two parts: the model learner and the policy update. The learner continuously updates an estimate $\hat{\theta}_k$ of the unknown parameter $\theta_k$ based on the most recent state transition measurement. This parameter estimate is then fed into the policy update rule, which in turn continuously updates the control policy every time step
(see Figure~\ref{fig:interconnection}).

\begin{figure}[h] 
  \centering
  \includegraphics[scale=0.95]{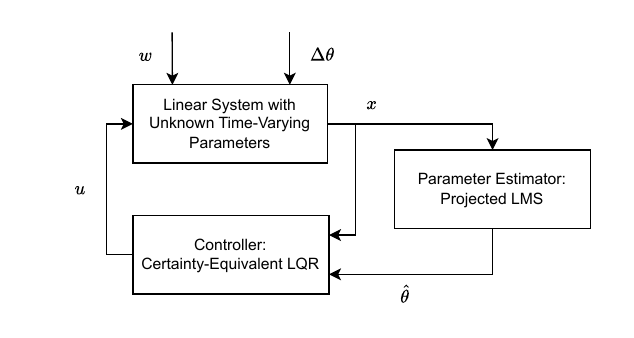}
  \caption{The interconnection of the unknown system, the parameter estimator, and the controller.} \label{fig:interconnection}
\end{figure}

\subsection{Model Learner} \label{sec:model_learner}

Before presenting the proposed model learner, we equivalently reformulate~\eqref{eq:System} in a form akin to affine linear regression:
\begin{equation} \label{eq:System_D}
    x_{k+1} = \delta(x_k,u_k) + D(x_k,u_k)\theta_k + w_k,
\end{equation}
where $\delta(x_k,u_k): \mathbb{R}^{n} \times \mathbb{R}^m \rightarrow \mathbb{R}^n$ and $D(x_k,u_k): \mathbb{R}^{n} \times \mathbb{R}^m \rightarrow \mathbb{R}^{n \times p}$ are linear and known functions. In the following, system representations~\eqref{eq:System} and~\eqref{eq:System_D} are used interchangeably, and the shorthand $D_k := D(x_k,u_k)$ will be used when convenient.
\par Define the 1-step nominal prediction of the state at time $k$ as $\hat{x}_{1|k} = A(\hat{\theta}_k)x_k + B(\hat{\theta}_k)u_k$.
The proposed model learner is the projected least mean square (LMS) estimator:
\begin{equation} \label{eq:lms}
    \hat{\theta}_k = \Pi_\Theta[\hat{\theta}_{k-1}+\mu D(x_{k-1},u_{k-1})^\top (x_k-\hat{x}_{1|k-1})],
\end{equation}
where $\Pi_\Theta[\theta] = \mathrm{argmin}_{\tilde{\theta}\in \Theta} \|\tilde{\theta}-\theta \|$ represents the
projection onto the convex set $\Theta$, and $\mu>0$ is the step size. For the analysis in the following sections, we introduce the following definitions. Let $\varphi_k$ denote the parameter estimation error at time $k$, defined as $\varphi_k := \hat{\theta}_k-\theta_k$. Given $\varphi_k$, we can define the 1-step nominal prediction error as
\begin{equation*}
    e_{1|k} := x_{k+1}-\hat{x}_{1|k}-w_k = D_k\theta_k - D_k\hat{\theta}_k = -D_k\varphi_k.
\end{equation*}
An interesting insight is that in the noiseless case ($w_k \equiv 0$), the LMS update is equivalent to a gradient descent step on the squared 1-step prediction error $\|e_{1|k}\|^2$. 
\par The (unprojected) LMS filter is a fundamental tool in recursive parameter estimation~\citep{astrom1994adaptive}. The projected version has recently been applied for parameter estimation in adaptive model predictive control~\citep{lorenzen2019robust,degner2026adaptive}.
While for general convex sets $\Theta$ the projected LMS involves solving a convex optimization problem at every time step, in the common case where $\Theta$ is a hyperrectangle, projection reduces to clipping the input vector in an elementwise manner.

An advantage of the LMS compared to the more popular recursive least squares (RLS) method is that RLS is unsuitable for the estimation of continuously changing parameters due to its inability to forget.
A popular approach to overcome this limitation of RLS is to use exponential forgetting. However, ensuring desirable behavior of the resulting adaptive controller becomes challenging, as one needs to consider how fast the parameters change and tune the forgetting factor accordingly.


\subsection{Policy Update} \label{sec:policy_update}

Given the current estimate $\hat{\theta}_k$, the policy is updated to the corresponding certainty-equivalent LQR. In our case, certainty equivalence means that the algorithm pretends that the true parameter does not change over time, and it also disregards the presence of disturbances. The policy update is given by
\begin{equation} \label{eq:lqr}
    u_k = K_\mathrm{LQR}(\hat{\theta}_k)x_k,
\end{equation}
where $K_\mathrm{LQR}(\theta)$ represents the LQR gain corresponding to the LTI system $A(\theta), B(\theta)$ and weight matrices $Q, R \succ 0$, defined as $K_\mathrm{LQR}(\theta) =-(R+B(\theta)^\top P(\theta) B(\theta))^{-1}B(\theta)^\top P(\theta) A(\theta)$. Here $P(\theta)$ denotes the unique positive definite solution to the discrete-time algebraic Riccati equation
\begin{equation} \label{eq:dare}
    P(\theta) = A(\theta)^\top P(\theta)A(\theta) - A(\theta)^\top P(\theta) B (\theta) (R + B(\theta)^\top P(\theta) B(\theta))^{-1}B(\theta)^\top P(\theta) A(\theta) + Q,
\end{equation}
and it serves as the Lyapunov function associated with the LQR:
\begin{equation} \label{eq:Lyapunov_def}
    V(x,\theta) = x^\top P(\theta) x,
\end{equation}
\par In summary, at each time step, the algorithm pretends to know the model parameter perfectly and that this parameter does not change over time, and selects the corresponding optimal infinite horizon controller.
As shown in the following sections, this results in a stable closed-loop behavior. Since both the LMS and LQR are computationally cheap, the policy can be updated at every sampling time, even at high rates. The proposed algorithm is summarized in Algorithm~\ref{alg:LMS+LQR}.

\begin{algorithm2e}
\caption{Certainty-Equivalent Adaptive LQR}
\label{alg:LMS+LQR}
\LinesNumbered
\KwIn{Parameter set $\Theta$, initial state $x_0$, initial parameter estimate $\hat{\theta}_0 \in \Theta$, step size $\mu>0$;}
\For{$k \in \mathbb{N}$}{
Update policy to the certainty-equivalent LQR: $K_k = K_\mathrm{LQR}(\hat{\theta}_k)$\;
Apply control input $u_k = K_kx_k$\;
Measure new state $x_{k+1}$\;
Update the parameter estimate $\hat{\theta}_{k+1}$ according to the projected LMS estimator~\eqref{eq:lms}\;
}
\end{algorithm2e}

\subsection{Closed-loop Analysis} \label{sec:analysis}

Although system~\eqref{eq:System} is linear, its closed-loop interconnection with the projected LMS estimator~\eqref{eq:lms} and the certainty-equivalent LQR~\eqref{eq:lqr} is nonlinear and time-varying (due to the fact that the control policy is a nonlinear function of the past and the system is time-varying). Consequently, we resort to a Lyapunov-type stability analysis. 
First, we 
define the diameter of the set $\Theta$ as $d = \max_{\theta_1,\theta_2 \in \Theta}\| \theta_1-\theta_2\|$, and
restate relevant results on the LMS from the literature.

\begin{proposition}\textbf{(LMS bounds~\citep[Proposition~1]{degner2026adaptive})} \label{prop:lms}
    Let Assumptions~\ref{ass:W} and~\ref{ass:Theta} hold. Suppose that the state and the input are uniformly bounded: $ \|x_k\| \leq X,\|u_k\| \leq U,\ \forall k \in \mathbb{N}$. If the step size satisfies
    \begin{equation} \label{eq:mu_cond}
        \frac{1}{\mu} \geq \sup_{\|x\| \leq X,\|u\|\leq U}\| D(x,u) \|^2,
    \end{equation}
    then
    \begin{enumerate}[a)]
        \item the parameter estimation error $\varphi_k = \hat{\theta}_k-\theta_k$ satisfies
        \begin{equation} \label{eq:lyap_phi}
            \| \varphi_{k+1}\|^2 - \| \varphi_k\|^2 \leq -\mu \| e_{1|k}\|^2 + \mu \| w_k\|^2 + 2d\| \Delta\theta_k\|,\quad \forall k \in \mathbb{N},
        \end{equation}
        \item the distance between consecutive estimates is bounded by
        \begin{equation} \label{eq:hat_bound}
            \| \hat{\theta}_{k+1} - \hat{\theta}_k\| \leq \sqrt{\mu} \| e_{1|k} + w_k\|,\quad \forall k \in \mathbb{N},
        \end{equation}
        \item and $\forall T \in \mathbb{N}$ it holds that
        \begin{equation} \label{eq:e_bound}
            \sum_{k=0}^T \| e_{1|k}\|^2 \leq \frac{1}{\mu} \| \varphi_0\|^2 + \sum_{k=0}^T \left( \| w_k\|^2 + \frac{2d}{\mu}\| \Delta\theta_k\|\right).
        \end{equation}
    \end{enumerate}
\end{proposition}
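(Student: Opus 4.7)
The starting point is to rewrite the LMS update in a form that isolates the prediction error. Substituting the dynamics~\eqref{eq:System_D} and the definition of $\hat{x}_{1|k}$ gives $x_{k+1} - \hat{x}_{1|k} = -D_k\varphi_k + w_k = e_{1|k} + w_k$, so~\eqref{eq:lms} becomes $\hat{\theta}_{k+1} = \Pi_\Theta[\hat{\theta}_k + \mu D_k^\top (e_{1|k} + w_k)]$. Claim~(b) follows immediately: since $\hat{\theta}_k \in \Theta$ is its own projection, non-expansiveness of $\Pi_\Theta$ on a convex set yields $\|\hat{\theta}_{k+1} - \hat{\theta}_k\| \leq \mu \|D_k\|\,\|e_{1|k} + w_k\|$, and~\eqref{eq:mu_cond} implies $\mu \|D_k\| \leq \sqrt{\mu}$, closing this part.

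For claim~(a), I would split the analysis into a pure LMS step followed by a parameter-drift correction. Introduce the ``frozen-parameter'' error $\tilde{\varphi}_{k+1} := \hat{\theta}_{k+1} - \theta_k$. Since $\theta_k \in \Theta$, non-expansiveness gives $\|\tilde{\varphi}_{k+1}\|^2 \leq \|\varphi_k + \mu D_k^\top(e_{1|k} + w_k)\|^2$. Expanding the square, using $D_k\varphi_k = -e_{1|k}$ to collapse the cross-term into $\varphi_k^\top D_k^\top(e_{1|k}+w_k) = -\|e_{1|k}\|^2 - e_{1|k}^\top w_k$, and absorbing the quadratic term via $\mu^2\|D_k\|^2 \leq \mu$ from~\eqref{eq:mu_cond}, the terms linear in $w_k$ cancel exactly and one arrives at $\|\tilde{\varphi}_{k+1}\|^2 \leq \|\varphi_k\|^2 - \mu\|e_{1|k}\|^2 + \mu\|w_k\|^2$.

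The remaining step is to pass from $\tilde{\varphi}_{k+1}$ to $\varphi_{k+1} = \tilde{\varphi}_{k+1} - \Delta\theta_k$. The identity $\|\varphi_{k+1}\|^2 - \|\tilde{\varphi}_{k+1}\|^2 = -\Delta\theta_k^\top(\varphi_{k+1} + \tilde{\varphi}_{k+1}) = -\Delta\theta_k^\top\bigl(2\hat{\theta}_{k+1} - \theta_k - \theta_{k+1}\bigr)$ reduces the problem to controlling the rightmost vector. Convexity of $\Theta$ places the midpoint $\frac{1}{2}(\theta_k + \theta_{k+1})$ inside $\Theta$, so by the definition of $d$ we have $\|2\hat{\theta}_{k+1} - \theta_k - \theta_{k+1}\| = 2\|\hat{\theta}_{k+1} - \frac{1}{2}(\theta_k+\theta_{k+1})\| \leq 2d$, and Cauchy--Schwarz yields $\|\varphi_{k+1}\|^2 \leq \|\tilde{\varphi}_{k+1}\|^2 + 2d\|\Delta\theta_k\|$. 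Combining with the previous paragraph produces~\eqref{eq:lyap_phi}. Claim~(c) then follows by telescoping~\eqref{eq:lyap_phi} over $k = 0,\ldots,T$, discarding $\|\varphi_{T+1}\|^2 \geq 0$, and dividing by $\mu$.

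The main obstacle I anticipate is obtaining the sharp constant $2d$ in front of $\|\Delta\theta_k\|$: a naive triangle-inequality bound $\|\varphi_{k+1}\|^2 \leq (\|\tilde{\varphi}_{k+1}\| + \|\Delta\theta_k\|)^2$ together with $\|\Delta\theta_k\|^2 \leq d\|\Delta\theta_k\|$ would inflate this coefficient and weaken the telescoping estimate in~(c); the midpoint-in-$\Theta$ argument is what recovers the clean, linear-in-$\|\Delta\theta_k\|$ bound needed to keep $c_2$ in Definition~\ref{def:l2} small.
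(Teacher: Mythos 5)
Your proof is correct, and it is essentially the standard argument behind the cited result (the paper itself does not reprove Proposition~\ref{prop:lms} but imports it from \citet[Proposition~1]{degner2024adaptive}): non-expansiveness of $\Pi_\Theta$ for part~(b) and for the frozen-parameter step, exact cancellation of the $e_{1|k}^\top w_k$ cross-terms using $\mu\|D_k\|^2\le 1$, a drift correction bounded via the diameter of $\Theta$, and telescoping for part~(c). One small remark: your midpoint-convexity step is not needed to get the constant $2d$ --- since $\varphi_{k+1}+\tilde{\varphi}_{k+1}=(\hat{\theta}_{k+1}-\theta_k)+(\hat{\theta}_{k+1}-\theta_{k+1})$ is a sum of two differences of points of $\Theta$, the triangle inequality already gives $\|\varphi_{k+1}+\tilde{\varphi}_{k+1}\|\le 2d$ directly.
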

Note that $\| \varphi_k\|^2$ can be thought of as a Lyapunov function for the estimator dynamics~\eqref{eq:lms}, and~\eqref{eq:lyap_phi} as the corresponding Lyapunov inequality. The right-hand side consists of increase terms as functions of the disturbances $w_k$ and $\Delta\theta_k$, and a semidefinite decrease term: $\|e_{1|k}\|^2 = \| D(x_k,u_k)\varphi_k\|^2 = \varphi_k^\top D_k^\top D_k \varphi_k$ with $D_k^\top D_k \succeq 0$.
\par Inequality~\eqref{eq:e_bound} provides a finite-horizon upper bound on the prediction error. Taking the limit $T \rightarrow \infty$, this result implies that the prediction error sequence is square-summable ($\{e\} \in \ell^2$), as long as $\{w\} \in \ell^2$ and $\{\Delta\theta\} \in \ell^1$. Inequality~\eqref{eq:hat_bound} provides a bound on the variation of the parameter estimates $\hat{\theta}_k$, which will be important in the following closed-loop analyses. Finally, note that Proposition~\ref{prop:lms} does not require persistency of excitation.
\par Having derived bounds and a Lyapunov inequality for the LMS estimator, we proceed to derive similar results for the policy update. We will make use of the following technical lemma:
\begin{lemma}\textbf{(Lipschitz continuity of the LQR over $\Theta$)} \label{lem:lqr_lipschitz}
    Let Assumption~\ref{ass:Theta} hold. Then $\exists L_K>0$ such that
    \begin{equation*}
        \| K_\mathrm{LQR}(\theta') - K_\mathrm{LQR}(\theta) \| \leq L_K\| \theta'-\theta\|,\quad \forall \theta, \theta' \in \Theta.
    \end{equation*}
\end{lemma}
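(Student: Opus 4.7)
The plan is to prove Lipschitz continuity by first establishing Lipschitz continuity of $P(\theta)$ on $\Theta$, and then combining this with the affine (hence Lipschitz) dependence of $A(\theta), B(\theta)$ on $\theta$ and the smoothness of matrix inversion on a compact set of positive-definite matrices. Concretely, writing
\begin{equation*}
    K_\mathrm{LQR}(\theta) = -M(\theta)^{-1} B(\theta)^\top P(\theta) A(\theta), \qquad M(\theta) := R + B(\theta)^\top P(\theta) B(\theta),
\end{equation*}
I would first argue boundedness of all terms: since $\Theta$ is compact and $A,B$ are affine, $\|A(\theta)\|,\|B(\theta)\|$ are bounded on $\Theta$; since $Q \succ 0$ and the DARE solution satisfies $P(\theta) \succeq Q$, the map $P(\theta)$ admits a uniform lower bound, and once its upper boundedness on $\Theta$ is established, $M(\theta) \succeq R \succ 0$ is also uniformly upper and lower bounded. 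On such a set, $M \mapsto M^{-1}$ is Lipschitz, so the whole expression becomes Lipschitz in $\theta$ provided $P(\theta)$ itself is.

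The core of the proof therefore is to show that $\theta \mapsto P(\theta)$ is Lipschitz on $\Theta$. I would invoke the classical fact that, under Assumption~\ref{ass:Theta} together with $Q,R\succ 0$, the DARE~\eqref{eq:dare} has a unique positive-definite solution that depends analytically on $(A(\theta),B(\theta))$. The formal argument uses the implicit function theorem applied to $F(P,\theta) := P - Q - A(\theta)^\top P A(\theta) + A(\theta)^\top P B(\theta) M(\theta)^{-1} B(\theta)^\top P A(\theta) = 0$. The partial derivative of $F$ with respect to $P$, evaluated at the solution, simplifies (after a standard calculation) to the Stein operator $\Delta P \mapsto \Delta P - A_\mathrm{cl}(\theta)^\top \Delta P A_\mathrm{cl}(\theta)$ with $A_\mathrm{cl}(\theta) = A(\theta)+B(\theta)K_\mathrm{LQR}(\theta)$; this operator is invertible on the space of symmetric matrices because $A_\mathrm{cl}(\theta)$ is Schur-stable (an optimality property of the LQR). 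Since stabilizability is an open condition, $\Theta$ is contained in an open set on which $P$ is $C^1$, and compactness of $\Theta$ then yields a uniform bound on $\|\nabla_\theta P(\theta)\|$ and thus global Lipschitz continuity of $P$ on $\Theta$.

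With $P$ Lipschitz and bounded on $\Theta$, the product $B(\theta)^\top P(\theta) A(\theta)$ is Lipschitz, $M(\theta)$ is Lipschitz and uniformly positive definite, and its inverse is therefore Lipschitz as well. Combining these by the standard product rule for Lipschitz maps (using uniform bounds on each factor via compactness of $\Theta$) yields a Lipschitz constant $L_K$ for $K_\mathrm{LQR}$. The main obstacle I anticipate is the $C^1$ (or Lipschitz) dependence of the DARE solution on $\theta$: executing this rigorously requires either the implicit function theorem argument sketched above together with a careful verification that the Stein operator inversion is uniformly bounded over $\Theta$, or an appeal to a known perturbation result for the DARE. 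Because the statement is well established in the control literature, for the paper it likely suffices to cite such a reference and then perform the short chain-rule / product-rule calculation that lifts Lipschitz continuity of $P$ to Lipschitz continuity of $K_\mathrm{LQR}$.
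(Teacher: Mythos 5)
Your proposal is correct and follows essentially the same route as the paper: the paper also differentiates the DARE implicitly (via the explicit Jacobian formulas of \citet[Proposition~2]{east2020infinite}), and its invertibility condition on $Z_1 = I_{n^2} - (A+BK)^\top \otimes (A+BK)^\top$ is exactly the matrix form of the Stein operator you identify, with the uniform bound on $Z_1^{-1}$ over the compact set $\Theta$ obtained from Schur stability of the closed loop via a Neumann series. The crux you flag --- uniform invertibility of the Stein operator over $\Theta$ --- is precisely the step the paper makes explicit.
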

\begin{proof}
    Lipschitz continuity can be proven by showing that the Jacobians $\partial \mathrm{vec}(P(\theta))/\partial\mathrm{vec}(A(\theta))$ and $\partial \mathrm{vec}(P(\theta))/\partial\mathrm{vec}(B(\theta))$ derived in~\citet[Proposition~2]{east2020infinite} always exist and are uniformly upper bounded. The detailed proof can be found in Appendix~\ref{app:lqr_lipschitz}.
    \vspace{-0.2cm}
\end{proof}
With the help of Lemma~\ref{lem:lqr_lipschitz}, the following result can be proven:

\begin{proposition}\textbf{(LQR Lyapunov decrease)} \label{prop:lqr}
    Let Assumptions~\ref{ass:W} and~\ref{ass:Theta} hold. Suppose that the state is uniformly bounded: $ \|x_k\| \leq X,\ \forall k \in \mathbb{N}$. Consider any sequence of parameter estimates $\hat{\theta}_k$ that satisfy the bound~\eqref{eq:hat_bound}, and consider the instantaneous and certainty-equivalent LQR update~\eqref{eq:lqr}. Then there exist constants $\alpha,\beta,\gamma > 0$ such that the following Lyapunov decrease condition is satisfied:
    \begin{equation} \label{eq:lyap_lqr}
        V(x_{k+1},\hat{\theta}_{k+1}) - V(x_k,\hat{\theta}_k) \leq -\alpha \| x_k\|^2 + \beta \| e_{1|k}\|^2 + \gamma \| w_k\|^2,\quad \forall k \in \mathbb{N},
    \end{equation}
    where $V(x,\theta)$ is defined in~\eqref{eq:Lyapunov_def}.
\end{proposition}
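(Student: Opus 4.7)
The plan is to split the Lyapunov increment into a dynamics piece and a parameter-variation piece,
\[
V(x_{k+1},\hat\theta_{k+1}) - V(x_k,\hat\theta_k) = \bigl[V(x_{k+1},\hat\theta_k) - V(x_k,\hat\theta_k)\bigr] + \bigl[V(x_{k+1},\hat\theta_{k+1}) - V(x_{k+1},\hat\theta_k)\bigr],
\]
and bound each term separately. For the dynamics piece, I would write $x_{k+1} = \hat{A}_{\mathrm{cl},k} x_k + e_{1|k} + w_k$ with $\hat{A}_{\mathrm{cl},k} := A(\hat\theta_k) + B(\hat\theta_k)K_\mathrm{LQR}(\hat\theta_k)$, and invoke the standard identity $\hat{A}_{\mathrm{cl},k}^\top P(\hat\theta_k)\hat{A}_{\mathrm{cl},k} \preceq P(\hat\theta_k) - Q$, which is a direct consequence of~\eqref{eq:dare}. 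Expanding $\|x_{k+1}\|_{P(\hat\theta_k)}^2$ and handling the resulting cross term with a Young inequality scaled by a small parameter $\tau>0$, using the uniform bound $P(\theta)\preceq p_{\max} I$ on the compact set $\Theta$, yields a bound of the form $-(\lambda_{\min}(Q) - \tau p_{\max})\|x_k\|^2 + \beta_0\|e_{1|k}\|^2 + \gamma_0\|w_k\|^2$; for $\tau$ small, the $\|x_k\|^2$ coefficient becomes strictly negative with magnitude $\alpha_0$ arbitrarily close to $\lambda_{\min}(Q)$.

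For the parameter-variation piece, the key ingredient is Lipschitz continuity of $P(\theta)$ on $\Theta$, which follows from the same Jacobian argument used for Lemma~\ref{lem:lqr_lipschitz}: the derivatives of $P$ with respect to $A$ and $B$ derived in~\citet{east2020infinite} exist and are uniformly bounded on the compact stabilizable set. Denoting the Lipschitz constant by $L_P$ and combining with~\eqref{eq:hat_bound} gives
\[
V(x_{k+1},\hat\theta_{k+1}) - V(x_{k+1},\hat\theta_k) \leq L_P\sqrt{\mu}\,\bigl(\|e_{1|k}\| + \|w_k\|\bigr)\|x_{k+1}\|^2.
\]
Since $\|x_k\|$, $\|u_k\| = \|K_\mathrm{LQR}(\hat\theta_k) x_k\|$, $\|e_{1|k}\| = \|D_k\varphi_k\|$ (with $\|\varphi_k\|\leq d$), and $\|w_k\|$ are all uniformly bounded, I would then expand $\|x_{k+1}\|^2 \leq 3\|\hat{A}_{\mathrm{cl},k}\|^2\|x_k\|^2 + 3\|e_{1|k}\|^2 + 3\|w_k\|^2$ and apply Young's inequality with a small parameter $\eta>0$ to the mixed terms of the form $\|e_{1|k}\|\|x_k\|^2$ and $\|w_k\|\|x_k\|^2$, splitting them into $\eta\|x_k\|^2$ plus a bounded multiple of $\|e_{1|k}\|^2$ or $\|w_k\|^2$. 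The boundedness of $\|e_{1|k}\|$ and $\|w_k\|$ similarly collapses the higher-order cross terms (e.g.\ $\|e_{1|k}\|^3$) into multiples of $\|e_{1|k}\|^2$ and $\|w_k\|^2$, producing an overall bound $\alpha_1\|x_k\|^2 + \beta_1\|e_{1|k}\|^2 + \gamma_1\|w_k\|^2$ with $\alpha_1 = \mathcal{O}(\sqrt{\mu}\,\eta)$.

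Summing the two pieces yields the claim with $\alpha := \alpha_0 - \alpha_1$, $\beta := \beta_0 + \beta_1$, $\gamma := \gamma_0 + \gamma_1$. The hardest part, where care is needed, is ensuring $\alpha > 0$: the Lipschitz "price" $\alpha_1$ paid for using the parameter-dependent Lyapunov function $V(\cdot,\hat\theta_k)$ must remain strictly below the nominal LQR decrease $\alpha_0 \approx \lambda_{\min}(Q)$. Because the variation piece's contribution to the $\|x_k\|^2$ coefficient is proportional to the Young parameter $\eta$, this can always be enforced by selecting $\tau$ and $\eta$ small enough, at the price of inflated $\beta$ and $\gamma$. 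The secondary technical step, uniform Lipschitz continuity of $P(\theta)$ over $\Theta$, is an immediate byproduct of the machinery already employed in Lemma~\ref{lem:lqr_lipschitz}.
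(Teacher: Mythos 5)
Your proof is correct, and your treatment of the dynamics piece $V(x_{k+1},\hat\theta_k)-V(x_k,\hat\theta_k)$ — the Riccati identity from~\eqref{eq:dare} plus a Young inequality — is essentially identical to the paper's. Where you genuinely diverge is the parameter-variation piece $V(x_{k+1},\hat\theta_{k+1})-V(x_{k+1},\hat\theta_k)$: you bound it by $L_P\|\hat\theta_{k+1}-\hat\theta_k\|\,\|x_{k+1}\|^2$ via Lipschitz continuity of $\theta\mapsto P(\theta)$, which is indeed a legitimate byproduct of the Jacobian bounds already established in Appendix~\ref{app:lqr_lipschitz}, whereas the paper rewrites both values of $V$ as infinite-horizon costs-to-go~\eqref{eq:infinite_cost_to_go}, compares the two auxiliary closed-loop trajectories started at $x_{k+1}$, and sums the resulting geometric series. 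Your route is shorter and more elementary, but the paper's buys something concrete: the trajectory perturbation $(\xi_j,\eta_j)$ is \emph{linear} in $\|\hat\theta_{k+1}-\hat\theta_k\|$, so after Young's inequality the variation term is \emph{quadratic} in $\|\hat\theta_{k+1}-\hat\theta_k\|$ and, via~\eqref{eq:hat_bound}, contributes $\mu(C_1X^2+C_2\bar W^2)\|e_{1|k}+w_k\|^2$; this is what gives $\beta$ the specific form $\nu_2\mu X^2+\nu_3\mu+\nu_4$ in~\eqref{eq:beta_form}, which the proof of Theorem~\ref{thm} exploits by choosing $\mu=\nu_1/X^2$ to make $\beta$ independent of $X$. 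Your Lipschitz bound is only linear in $\|\hat\theta_{k+1}-\hat\theta_k\|$, so after converting the mixed terms the dominant contribution to your $\beta$ scales like $\sqrt{\mu}\,X^2/\eta$ rather than $\mu X^2$. The proposition as stated still holds — its constants are allowed to depend on $X$ and $\mu$ — but your version is not a drop-in replacement for the paper's, since the self-consistent choice of $X$ and $\mu$ in the theorem's invariance argument would have to be reworked with this weaker scaling.
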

\begin{proof}
    The proof can be found in Appendix~\ref{app:prop}.
    \vspace{-0.2cm}
\end{proof}
\par The obvious limitation of directly applying Propositions~\ref{prop:lms} and~\ref{prop:lqr} is that we do not have an a~priori guarantee that the state and the input are bounded at all times. This is in contrast to prior works that utilize the LMS estimator for adaptive model predictive control~\citep{lorenzen2019robust, degner2026adaptive}, and overcoming this challenge is one of the key technical contributions of our work.
To this end, we define the Lyapunov function candidate for the joint state $(x_k,\varphi_k)$ as
\begin{equation} \label{eq:V_tilde_def}
        \tilde{V}(x_k,\varphi_k,k) := V(x_k,\varphi_k+\theta_k) + \frac{\beta}{\mu}\| \varphi_k\|^2 = x_k^\top P(\varphi_k+\theta_k) x_k + \frac{\beta}{\mu} \|\varphi_k\|^2,
\end{equation}
i.e., it is a linear combination of the Lyapunov functions from Propositions~\ref{prop:lms} and~\ref{prop:lqr}. Combining Propositions~\ref{prop:lms} and~\ref{prop:lqr} results in the main theoretical result of the paper.

\begin{theorem} \textbf{(Stability of the closed loop)} \label{thm}
Let Assumptions~\ref{ass:W} and~\ref{ass:Theta} hold, and assume that the diameter $d$ of the parameter set $\Theta$ is small enough. Consider the closed-loop interconnection of system~\eqref{eq:System}, projected LMS estimator~\eqref{eq:lms}, and certainty-equivalent LQR update~\eqref{eq:lqr}.
There exists a small enough step size $\mu > 0$ such that
\begin{enumerate}[a)]
    \item \label{enum:state_bound} the state and the input are uniformly bounded in time: $\|x_k\| \leq X,\ \|u_k\| \leq U,\ \forall k \in \mathbb{N}$ for some $X,U>0$,
    \item \label{enum:lyapunov} there exist constants $\tilde{\alpha},\tilde{\gamma},\tilde{\delta},\ubar{c},\bar{c}>0$ satisfying
    \begin{align}
        \underline{c} \| (x_k,\varphi_k)\|^2 \leq \tilde{V}(x_k,\varphi_k,k) &\leq \bar{c} \| (x_k,\varphi_k)\|^2,\quad \forall k \in \mathbb{N}, \label{eq:aug_lyap_bound}\\
        \tilde{V}(x_{k+1},\varphi_{k+1},k+1) - \tilde{V}(x_k,\varphi_k,k) &\leq - \tilde{\alpha}\| x_k\|^2 + \tilde{\gamma} \| w_k\|^2 + \tilde{\delta} \| \Delta\theta_k\|,\quad \forall k \in \mathbb{N}, \label{eq:aug_lyap_decrease} 
    \end{align}
    \item \label{enum:l2} and there exist constants $c_{1,2,3,4}>0$ such that $\forall T \in \mathbb{N}$ it holds that
    \begin{equation} \label{eq:x_l2}
        \sum_{k=0}^T \| x_k\|^2 \leq c_1 \|x_0\|^2+c_2\|\varphi_0\|^2 + c_3\sum_{k=0}^T \| w_k\|^2 + c_4\sum_{k=0}^T\| \Delta\theta_k\|,
    \end{equation}
    i.e., the state is finite-gain $\ell^2$-stable with respect to the disturbance $\{w\}$ as long as $\{\Delta\theta\} \in \ell^1$.
\end{enumerate}
\end{theorem}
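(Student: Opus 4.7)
The plan is to combine the Lyapunov inequalities of Propositions~\ref{prop:lms} and~\ref{prop:lqr} through the weighted sum $\tilde V$ from~\eqref{eq:V_tilde_def}: first I would establish the joint decrease conditional on the uniform bounds required by the two propositions, then close the loop by a bootstrap induction to obtain the uniform bounds themselves, and finally deduce $\ell^2$-stability by telescoping.

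For part (b), the quadratic sandwich~\eqref{eq:aug_lyap_bound} is a consequence of continuity and compactness: Assumption~\ref{ass:Theta} together with the implicit function theorem applied to the DARE~\eqref{eq:dare} (as in the proof of Lemma~\ref{lem:lqr_lipschitz}) makes $P(\theta)$ continuous and uniformly positive definite on the compact set $\Theta$, so $\underline{P}\, I \preceq P(\theta) \preceq \bar{P}\, I$ for some constants $0 < \underline{P} \le \bar{P}$, giving $\underline{c} = \min\{\underline{P},\beta/\mu\}$ and $\bar{c} = \max\{\bar{P},\beta/\mu\}$. The decrease~\eqref{eq:aug_lyap_decrease} is obtained by adding $\beta/\mu$ times the LMS inequality~\eqref{eq:lyap_phi} to the LQR inequality~\eqref{eq:lyap_lqr}; the weight is chosen precisely so that the $-\beta\|e_{1|k}\|^2$ contribution from the scaled LMS cancels the $+\beta\|e_{1|k}\|^2$ term from the LQR, leaving $\tilde\alpha = \alpha$, $\tilde\gamma = \beta + \gamma$, and $\tilde\delta = 2\beta d/\mu$.

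The main obstacle is part (a), since Propositions~\ref{prop:lms} and~\ref{prop:lqr} a priori require the uniform state and input bounds that are themselves to be proven. I would resolve this by induction on $k$ with hypothesis $\tilde V(x_j,\varphi_j,j) \le V_{\max}$ for all $j \le k$. The hypothesis together with $\underline{P}\|x_j\|^2 \le x_j^\top P(\hat\theta_j) x_j \le \tilde V_j$ yields $\|x_j\|^2 \le V_{\max}/\underline{P} =: X^2$, and hence $\|u_j\| \le K_{\max} X =: U$ where $K_{\max} := \sup_{\theta \in \Theta}\|K_\mathrm{LQR}(\theta)\|$ is finite by Lemma~\ref{lem:lqr_lipschitz}. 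Since $D$ is linear, there is a structural constant $c_D$ such that $\sup_{\|x\|\le X,\|u\|\le U}\|D(x,u)\|^2 \le c_D(1 + K_{\max}^2) X^2$, so the step size condition~\eqref{eq:mu_cond} is verified by enforcing $\mu c_D(1 + K_{\max}^2) X^2 \le 1$. Applying the decrease from part (b), together with $\|w_k\| \le \bar W$, $\|\Delta\theta_k\| \le d$, and $\|\varphi_k\| \le d$ (since $\hat\theta_k,\theta_k\in\Theta$), yields $\tilde V_{k+1} \le \tilde V_k$ whenever $\|x_k\|^2 \ge (\tilde\gamma\bar W^2 + \tilde\delta d)/\tilde\alpha$, and otherwise $\tilde V_k \le \bar{P}(\tilde\gamma\bar W^2+\tilde\delta d)/\tilde\alpha + \beta d^2/\mu$ gives an explicit $\mu,d$-dependent upper bound on $\tilde V_{k+1}$; setting $V_{\max}$ to the maximum of this quantity and $\tilde V_0$ closes the induction.

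The self-consistency of the choices of $V_{\max}$, $\mu$, and $d$ is the crux of the argument. Substituting $X^2 = V_{\max}/\underline{P}$ into the step-size constraint gives a bound of the form $\mu V_{\max} \le \underline{P}/[c_D(1+K_{\max}^2)]$; since the dominant scaling of $V_{\max}$ is $\beta d^2/\mu$, the product $\mu V_{\max}$ retains an irreducible $\beta d^2$ contribution independent of $\mu$, which is what forces the quantitative ``small enough $d$'' hypothesis $d^2 \le \underline{P}/[C (1+K_{\max}^2)c_D\beta]$ for a suitable constant $C$. Once $d$ is small in this sense, the remaining $\mu$-dependent terms in $V_{\max}$ are absorbed by taking $\mu$ sufficiently small, completing the induction. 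For part (c), telescoping~\eqref{eq:aug_lyap_decrease} from $k=0$ to $T$, discarding the nonnegative $\tilde V(x_{T+1},\varphi_{T+1},T+1)$ on the left, and applying the upper bound $\tilde V(x_0,\varphi_0,0) \le \bar{c}(\|x_0\|^2 + \|\varphi_0\|^2)$ from~\eqref{eq:aug_lyap_bound} yields~\eqref{eq:x_l2} with $c_1 = c_2 = \bar{c}/\tilde\alpha$, $c_3 = \tilde\gamma/\tilde\alpha$, and $c_4 = \tilde\delta/\tilde\alpha$.
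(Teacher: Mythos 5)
Your proposal is correct and follows essentially the same route as the paper: the weighted combination $\tilde V = V + (\beta/\mu)\|\varphi\|^2$ chosen to cancel the prediction-error terms, a sublevel-set invariance argument via case distinction on $\|x_k\|$ to bootstrap the uniform bounds required by Propositions~\ref{prop:lms} and~\ref{prop:lqr}, the same self-consistency analysis identifying the irreducible $d^2$ term that forces the small-diameter hypothesis, and telescoping for part~(c). The only detail you gloss over is that $\beta$ from~\eqref{eq:beta_form} itself depends on $\mu$ and $X$, which the paper resolves by noting that the coupling $\mu \propto 1/X^2$ renders $\beta$ a constant.
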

\begin{proof}
    The proof can be found in Appendix~\ref{app:thm}.
    \vspace{-0.2cm}
\end{proof}
Inequality~\eqref{eq:x_l2} provides a finite-horizon upper bound on the state. Taking the limit $T \rightarrow \infty$, this result implies that the state trajectory is square-summable ($\{x\} \in \ell^2$), as long as $\{w\} \in \ell^2$ and $\{\Delta\theta\} \in \ell^1$, similarly to the corresponding result for the prediction error in Proposition~\ref{prop:lms}. 

\subsection{Discussion} \label{sec:discussion}

Compared to other non-episodic methods that learn the LQR online~\citep{wang2021exact, zhao2025policy}, the main advantage of the proposed method is its ability to handle continuously changing unknown parameters. Moreover, it does not need access to a stabilizing controller for the unknown system, instead, we require that the parameters lie inside the small enough known set $\Theta$. Additionally, other methods commonly use (recursive) least squares to estimate the parameter, and assume that the disturbance follows an independent and identically distributed zero-mean Gaussian distribution. In contrast, the LMS used by the proposed method is agnostic to distributional assumptions and instead requires boundedness of the disturbances (cf. Assumption~\ref{ass:W}).
\par Furthermore, existing methods (e.g., \citet{wang2021exact, zhao2025policy}) need to inject noise to ensure persistently exciting data.
In contrast, the proposed method lifts this requirement, allowing it to purely apply the LQR feedback without injecting additional noise.
Note that we require a small enough known parameter set $\Theta$.
Explicitly characterizing the exact condition on its size is difficult in practice. However, as also showcased in Section~\ref{sec:numerical_example}, this requirement is often satisfied by reasonable uncertainty sets in practical scenarios.
A resulting limitation of the proposed method is that it does not guarantee convergence of the parameter estimates. Consequently, the policy is also not guaranteed to converge to the optimal one.
On the other hand, this also highlights a further advantage:
stability of the closed-loop is guaranteed even without correctly learning the unknown system, which is beneficial whenever the unknown system is difficult-to-learn~\citep{tsiamis2021linear}. Finally, simulation results (see Section~\ref{sec:numerical_example}) suggest that, by ensuring persistent excitation, the proposed method is also able to learn the true parameter. This is not surprising, as the reason that the proposed method does not achieve perfect identification is that the LMS estimator only learns as long as there is a prediction error $e_{1|k}$. We expect that under persistent excitation $e_{1|k} \equiv 0$ implies the absence of parameter error, i.e., $\phi_k\equiv 0$.

\section{Numerical Example} \label{sec:numerical_example}


In order to demonstrate the real-world applicability of the proposed algorithm, we consider the stabilization of the planar quadrotor model used in~\citet{singh2023robust}, which is an unstable underactuated nonlinear system.\footnote{The code is available online: https://gitlab.ethz.ch/ics/adaptive-lqr-for-unknown-ltv-systems, \\\doi{10.3929/ethz-c-000798810}}
The discrete-time nonlinear dynamics obtained via forward Euler discretization can be written as

\small\begin{equation}  \label{eq:drone_nonlinear}
    x_{k+1} = \begin{bmatrix}
    p^x_{k+1} \\ p^z_{k+1} \\ \psi_{k+1} \\ v^x_{k+1} \\ v^z_{k+1} \\ \omega_{k+1} 
    \end{bmatrix} = \underbrace{\begin{bmatrix}
        1 & 0 & 0 & T_s & 0 & 0 \\ 0 & 1 & 0 & 0 & T_s & 0 \\ 0 & 0 & 1 & 0 & 0 & T_s \\
        0 & 0 & -gT_s & 1 & 0 & 0 \\
        0 & 0 & -[\theta_k]_1T_s & 0 & 1 & 0 \\ 0 & 0 & 0 & 0 & 0 & 1
    \end{bmatrix}}_{A(\theta_k)} x_k + \underbrace{\begin{bmatrix}
        0 & 0 \\ 0 & 0 \\ 0 & 0 \\ 0 & 0 \\ T_s/m & T_s/m \\ T_sl[\theta_k]_2 & -T_sl[\theta_k]_2 
    \end{bmatrix}}_{B(\theta_k)} u_k + 
    w_k^\mathrm{nl},
\end{equation}\normalsize
where the nonlinearities are captured by
\small\begin{equation}
    w_k^\mathrm{nl} = T_s\begin{bmatrix}
        v^x_k (\cos(\psi_k)-1)-v^z_k\sin(\psi_k) \\
        v^x_k \sin(\psi_k)+v^z_k(\cos(\psi_k)-1) \\
        0 \\
        v^z_k\omega_k - g\sin(\psi_k) + [\theta_k]_1 \cos(\psi_k) + g\psi_k \\
        -v^x_k\omega_k - g\cos(\psi_k) - [\theta_k]_1 \sin(\psi_k) + w^z_k + [\theta_k]_1\psi_k + [1/m\ \ 1/m]u_\mathrm{eq} \\
        w^\psi_k
    \end{bmatrix}.
\end{equation}\normalsize
The state $x \in \mathbb{R}^6$ consists of horizontal and vertical positions $p^x,p^z$, velocities in the body frame $v^x,v^z$, and pitch angle and angular velocity $\psi,\omega$, and the input $u \in \mathbb{R}^2$ denotes deviation from the hovering thrust $u_\mathrm{eq} = \frac{mg}{2}[1\ \ 1]^\top$. Disturbances $(w^z,w^\psi) \in [-1,1]^2$ represent inaccuracies in the actuation.
The unknown parameter vector $\theta = ([\theta]_1, [\theta]_2) \in \Theta = [-10,10] \times [50, 500]$ consists of a time-varying horizontal wind force ($[\theta]_1$), and the inverse of the moment of inertia of the out-of-plane axis ($[\theta]_2$). Note that the set $\Theta$ is of considerable size.
Parameters $g = 9.81$, $m = 0.5$, $l = 0.25$ corresponding to the gravitational acceleration, mass, and length  of one arm, respectively, are assumed to be known and constant, and the discretization time is $T_s = 0.1$ (all in SI units). It can be shown that $\forall\theta\in\Theta$ the pair $(A(\theta),B(\theta))$  is stabilizable, i.e., Assumption~\ref{ass:Theta} is satisfied.

\begin{figure}[h]
    \centering
    \includegraphics[width=0.9\textwidth]{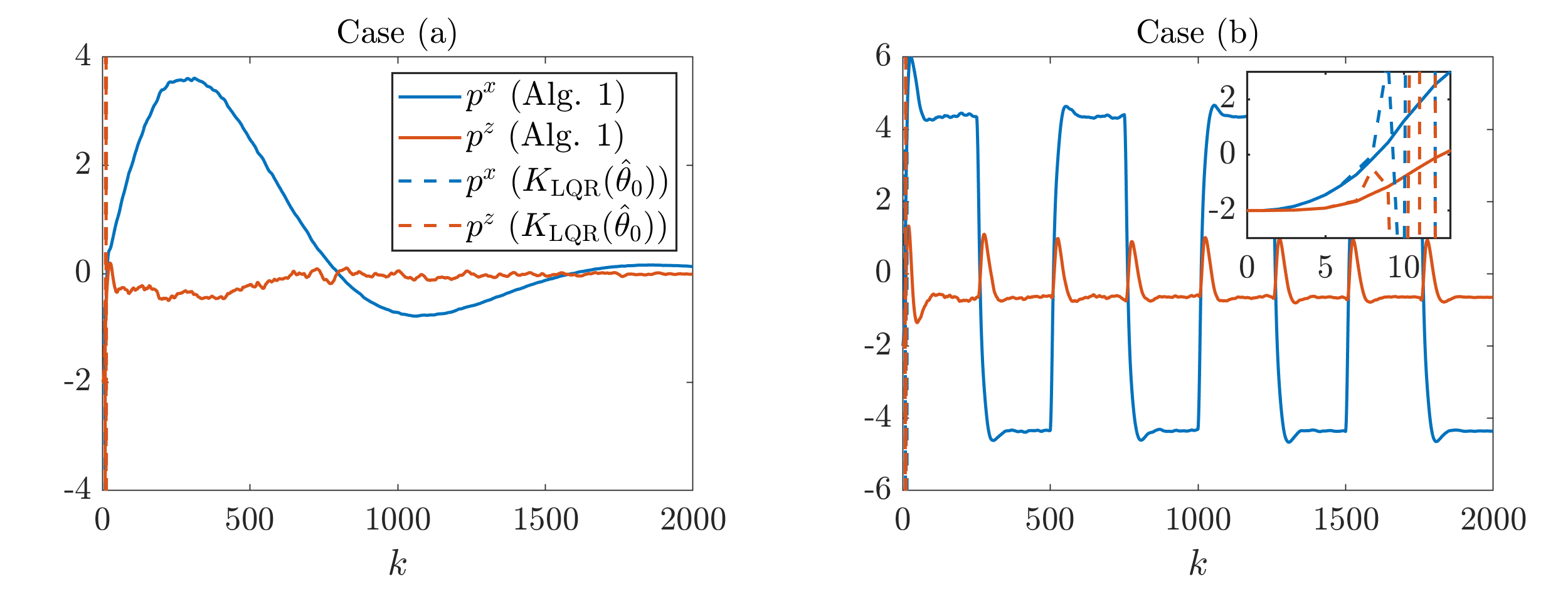}
    \caption{Closed-loop position trajectories of the proposed adaptive controller (Alg.~\ref{alg:LMS+LQR}) (solid) and a non-adaptive baseline (diverging dashed, see zoomed inset on the right) applied to the nonlinear planar quadrotor~\eqref{eq:drone_nonlinear} for Cases (a) and (b).} \label{fig:nonlinear}
    \vspace{-1em}
\end{figure}

\par Regarding the unknown parameters, we consider two cases. The wind $[\theta_k]_1$ follows a decaying
pattern in Case (a), and a persistent square waveform with constant amplitude in Case (b) (see Figure~\ref{fig:linear} for the wind profiles), and the true value for the inverse of the inertia is $[\theta_k]_2 \equiv 250$ in both cases. First, we apply Algorithm~\ref{alg:LMS+LQR} to control the true nonlinear system~\eqref{eq:drone_nonlinear}, with $\mu = 50$, $Q = I_6$, $R = 10I_2$.
The initial conditions are $p^x_0 = p^y_0 = -2$
and $\hat{\theta}_0 = [0\ \ 100]^\top$, and disturbances $(w^z_k,w^\psi_k)$ are sampled independently from a uniform distribution over $[-1,1]\cdot\mathrm{exp}(-0.001k)$.  As Figure~\ref{fig:nonlinear} shows, the proposed method achieves asymptotic stability of the closed-loop in Case (a), and keeps the system from diverging in Case (b).
Figure~\ref{fig:nonlinear} also highlights the need for online adaptation: the initial policy $K_\mathrm{LQR}(\hat{\theta}_0)$ is unable to stabilize the system.
\par In order to investigate the convergence of the parameter estimates, we test our algorithm on the linearized dynamics $x_{k+1} = A(\theta_k)x_k + B(\theta_k)u_k + w^\mathrm{lin}_k$ with $w_k^\mathrm{lin} = T_s[0\ \ 0\ \ 0 \ \ [\theta_k]_1 \ \ w^z_k \ \ w^\psi_k]^\top$. For this example, we apply the input $u_k = K_kx_k + \epsilon_k$, where the exploratory noise $\epsilon_k \stackrel{\text{i.i.d.}}{\sim} \mathcal{N}(0,\mathrm{diag}(0.5^2,0.1^2))$ is injected to the system to ensure persistency of excitation. As illustrated by Figure~\ref{fig:linear}, the proposed method is able to learn the true value of the moment of inertia and track the unknown time-varying wind parameter. This highlights a trade-off that one can make: if only stability is important, then there is no need to inject noise, but learning the parameter can also be achieved by injecting additional noise.

\begin{figure}[h]
    \centering
    \includegraphics[width=0.9\textwidth]{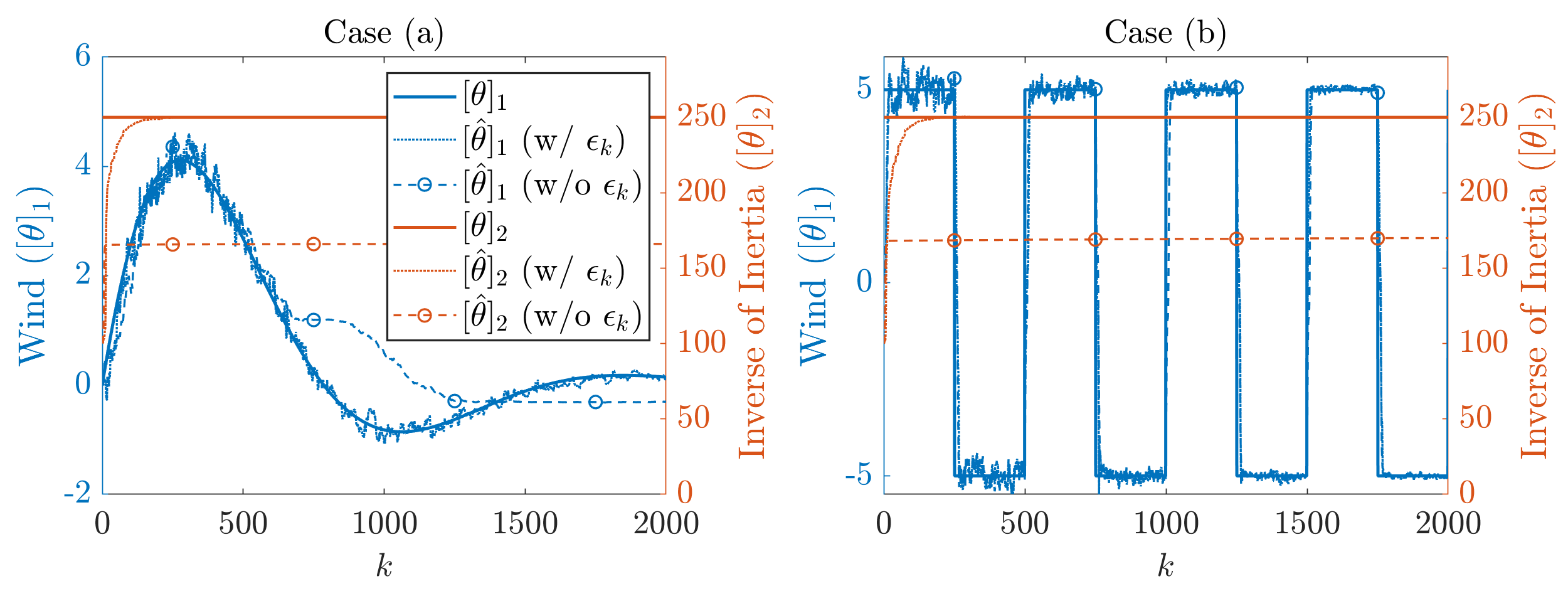}
    \caption{Comparison of the proposed adaptive controller (Alg.~\ref{alg:LMS+LQR}) with (dotted) and without (dashed) additional exploratory noise $\epsilon_k$ in terms of convergence of the estimates to the true values (solid) on the linearized dynamics for Cases (a) and (b).} \label{fig:linear}
    \vspace{-2em}
\end{figure}

\section{Conclusion and Outlook} \label{sec:conclusion}

In this work, we proposed an adaptive control scheme for discrete-time linear systems with unknown time-varying parameters. The proposed method is a combination of two off-the-shelf building blocks from the literature: the certainty-equivalent linear quadratic regulator and the projected least mean square estimator. Despite its simplicity, $\ell^2$-stability of the closed-loop interconnection was proven, even without assuming that the data is persistently exciting. The real-world applicability of the proposed method was demonstrated on a nonlinear planar quadrotor model. Furthermore, simulation results also suggest that the proposed method is able to learn the unknown parameter if the data is persistently exciting, resulting in asymptotically optimal behavior. The formal investigation of this hypothesis is the subject of future work.

\appendix
\section{Technical Proofs}

\subsection{Notation and Preliminaries} \label{app:notation}
We will often use the shorthand notations $A_k := A(\theta_k), B_k := B(\theta_k), \hat{A}_k := A(\hat{\theta}_k), \hat{B}_k := B(\hat{\theta}_k)$, $P_k := P(\hat{\theta}_k), K_k := K_\mathrm{LQR}(\hat{\theta}_k)$, $\hat{A}_{\mathrm{cl},k} := \hat{A}_k+\hat{B}_kK_k$, and $\tilde{Q} :=\mathrm{diag}(Q,R)$. Since $\Theta$ is bounded and pointwise stabilizable (cf. Assumption~\ref{ass:Theta}), we can define the following uniform bounds:
\begin{align}
\begin{split} \label{eq:uniform_const}
    &\| K_\mathrm{LQR}(\theta)\| \leq \bar{K}, \quad P(\theta) \preceq \bar{p}I, \quad R \preceq rI, \quad  \ubar{q} \preceq Q \preceq qI,\quad \forall\theta \in \Theta, \\ &\| A(\theta)+B(\theta)K_\mathrm{LQR}(\theta')\| \leq \bar{a},\quad \forall\theta,\theta' \in \Theta.
\end{split}
\end{align}
\par Throughout the proofs we will make use of the following lemma (in addition to Lemma~\ref{lem:lqr_lipschitz}):
\begin{lemma}\textbf{(Young's inequality)} \label{lem:young}
    For any $0 \prec M\in \mathbb{R}^{n\times n},\ \forall\varepsilon>0$ and $\forall x,y \in \mathbb{R}^n$, it holds that
    \begin{equation*}
        \|x+y\|^2_M \leq (1 + \varepsilon)\| x\|^2_M + \left(1 + \frac{1}{\varepsilon}\right) \|y\|^2_M.
    \end{equation*}
\end{lemma}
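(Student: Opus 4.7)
The plan is to reduce the claim to the scalar Young's inequality $2ab \le \varepsilon a^2 + \varepsilon^{-1} b^2$ (valid for all $a,b \in \mathbb{R}$ and $\varepsilon > 0$) applied to the weighted norms of $x$ and $y$. The key enabler is that $M \succ 0$ admits a symmetric square root $M^{1/2} \succ 0$, so the $M$-norm can be treated as an ordinary Euclidean norm after a change of variables.

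First, I would expand the left-hand side directly:
\begin{equation*}
\|x+y\|_M^2 = (x+y)^\top M (x+y) = \|x\|_M^2 + \|y\|_M^2 + 2 x^\top M y.
\end{equation*}
The only nontrivial term is the cross term $2 x^\top M y$, which I would rewrite as $2 (M^{1/2} x)^\top (M^{1/2} y)$ and bound using Cauchy--Schwarz to get $2 x^\top M y \le 2 \|M^{1/2} x\| \, \|M^{1/2} y\| = 2 \|x\|_M \|y\|_M$.

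Next, I would apply the scalar Young's inequality with $a = \|x\|_M$ and $b = \|y\|_M$, which gives
\begin{equation*}
2 \|x\|_M \|y\|_M \le \varepsilon \|x\|_M^2 + \frac{1}{\varepsilon} \|y\|_M^2.
\end{equation*}
Substituting this bound into the expansion above yields
\begin{equation*}
\|x+y\|_M^2 \le \|x\|_M^2 + \|y\|_M^2 + \varepsilon \|x\|_M^2 + \frac{1}{\varepsilon} \|y\|_M^2 = (1+\varepsilon)\|x\|_M^2 + \Bigl(1 + \frac{1}{\varepsilon}\Bigr) \|y\|_M^2,
\end{equation*}
which is exactly the claimed inequality. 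There is no real obstacle here: the result is a textbook consequence of positive definiteness of $M$ together with the scalar Young inequality, and the only place one must be careful is to ensure $M^{1/2}$ exists and is symmetric positive definite, which follows immediately from $M \succ 0$. The scalar Young's inequality itself is a one-line consequence of $(\sqrt{\varepsilon}\,a - b/\sqrt{\varepsilon})^2 \ge 0$, which I would cite rather than re-derive.
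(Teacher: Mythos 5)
Your proof is correct: the expansion of the quadratic form, the Cauchy--Schwarz bound $2x^\top M y \le 2\|x\|_M\|y\|_M$ via the square root $M^{1/2}$, and the scalar Young inequality together give exactly the claimed bound. The paper states this lemma without proof (treating it as a standard fact), and your argument is the standard one it implicitly relies on, so there is nothing to reconcile.
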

We will also use the following convergent series:
\begin{equation} \label{eq:polylog}
    \sum_{k=0}^\infty q^k = \frac{1}{1-q}, \quad \sum_{k=0}^\infty kq^{k-1} = \frac{1}{(1-q)^2}, \quad \sum_{k=0}^\infty k^2q^{k-1} = \frac{1+q}{(1-q)^3},\quad \forall |q| < 1.
\end{equation}
Finally, we remark that, since $A(\theta)$ and $B(\theta)$ are affine, Lemma~\ref{lem:lqr_lipschitz} implies that $\exists L_A > 0$ such that
\begin{equation} \label{eq:Acl_lipschitz}
    \|\hat{A}_{\mathrm{cl},j} - \hat{A}_{\mathrm{cl},k} \| \leq L_A\| \hat{\theta}_j-\hat{\theta}_k\|,\quad \forall \hat{\theta}_j, \hat{\theta}_k \in \Theta.
\end{equation}

\subsection{Proof of Lemma~\ref{lem:lqr_lipschitz}} \label{app:lqr_lipschitz}
\begin{proof}
    As shown in~\citet[Proposition~2]{east2020infinite}, the Jacobians $\partial \mathrm{vec}(P(\theta))/\partial\mathrm{vec}(A(\theta))$ and $\partial \mathrm{vec}(P(\theta))/\partial\mathrm{vec}(B(\theta))$ exist and are given by
    \begin{equation*}
        \frac{\partial \mathrm{vec}(P(\theta))}{\partial\mathrm{vec}(A(\theta))} = Z_1^{-1}Z_2, \quad \frac{\partial \mathrm{vec}(P(\theta))}{\partial\mathrm{vec}(B(\theta))} = Z_1^{-1}Z_3,
    \end{equation*}
    as long as $Z_1$ is invertible, where
    \begin{align*}
        Z_1 &= I_{n^2} - (A^\top\otimes A^\top)\big( I_{n^2} - (PBM_2B^\top \otimes I_n) - (I_n \otimes PBM_2B^\top ) \\
        &\hspace{15em} +(PB \otimes PB)(M_2 \otimes M_2)(B^\top \otimes B^\top)\big), \\
        Z_2 &= (V_{n,n} + I_{n^2})(I_{n^2} \otimes A^\top M_1), \\
        Z_3 &= (A^\top \otimes A^\top)\big( (PB \otimes PB)(M_2 \otimes M_2)(I_{m^2}-V_{m,m}(I_m \otimes B^\top P))\big) \\
        &\hspace{20em} -(I_{n^2}+V_{n,n}(PBM_2 \otimes P)), \\
        &\quad M_1 = P - PBM_2B^\top P, \quad M_2 = M_3^{-1}, \quad M_3 = R + B^\top P B,
    \end{align*}
    where the permutation matrix $V_{n,m} \in \mathbb{R}^{nm\times nm}$ is implicitly defined by $V_{n,m} \mathrm{vec}(A) = \mathrm{vec}(A^\top)$, and we omit the argument $\theta$ for ease of notation. Since
    \begin{equation*}
        Z_1 = I_{n^2} - (A^\top - A^\top PBM_2B^\top) \otimes (A^\top - A^\top PBM_2B^\top) = I_{n^2} - (A+BK)^\top \otimes (A+BK)^\top,
    \end{equation*}
    and $A+BK$ is Schur stable, $Z_1$ is invertible over $\Theta$, moreover, there exist $C_Z>0$ and $0\leq\rho_Z<1$ such that
    \begin{equation*}
        \|((A+BK)^\top \otimes (A+BK)^\top)^k \| \leq C_Z\rho_Z^k,\ \forall k \in \mathbb{N},\ \forall \theta \in \Theta.
    \end{equation*}
    Therefore,
    \begin{align*}
        \|Z_1^{-1}\| &= \|(I_{n^2} - (A+BK)^\top \otimes (A+BK)^\top)^{-1} \| \\ &\stackrel{\text{Neumann series}}{\leq} \sum_{k=0}^\infty \|((A+BK)^\top \otimes (A+BK)^\top)^k \| \leq \sum_{k=0}^\infty C_Z\rho_Z^k = \frac{C_Z}{1-\rho_Z},
    \end{align*}
    which, combined with compactness of $\Theta$, proves that the Jacobians are uniformly upper bounded over $\Theta$.
    \par Based on this, the Jacobians $\partial \mathrm{vec}(K_\mathrm{LQR}(\theta))/\partial\mathrm{vec}(A(\theta))$ and $\partial \mathrm{vec}(K_\mathrm{LQR}(\theta))/\partial\mathrm{vec}(B(\theta))$ can be derived using the same techniques as used in~\citet{east2020infinite}, and the corresponding uniform upper bounds also exist. Since $A(\theta)$ and $B(\theta)$ are affine in $\theta$, this implies that the Jacobian $\partial \mathrm{vec}(K_\mathrm{LQR}(\theta)) / \partial \theta$ is uniformly upper bounded over  $\Theta$, proving Lipschitz continuity over $\Theta$.
\end{proof}

\subsection{Proof of Proposition~\ref{prop:lqr}} \label{app:prop}
\begin{proof}
In order to bound the difference $V(x_{k+1},\hat{\theta}_{k+1}) - V(x_k,\hat{\theta}_k)$, we split it into two parts:
\begin{equation*}
    V(x_{k+1},\hat{\theta}_{k+1}) - V(x_k,\hat{\theta}_k) = \underbrace{V(x_{k+1},\hat{\theta}_{k+1}) - V(x_{k+1},\hat{\theta}_k)}_{(*)} + \underbrace{V(x_{k+1},\hat{\theta}_k) - V(x_k,\hat{\theta}_k)}_{(**)}.
\end{equation*}
\par The difference $(**)$ can be upper bounded as follows:
\begin{align*}
    V(x_{k+1},\hat{\theta}_k) - V(x_k,\hat{\theta}_k) &= \|x_{k+1}\|^2_{P_k} - \|x_k\|^2_{P_k} \\
    &\leq \left(1+\frac{1}{\varepsilon_1}\right)\| e_{1|k} + w_k\|^2_{P_k} + (1+\varepsilon_1)\| \hat{x}_{1|k}\|^2_{P_k} - \| x_k\|^2_{P_k},
\end{align*}
for arbitrary $\varepsilon_1>0$, due to $x_{k+1} = \hat{x}_{1|k} + e_{1|k} + w_k$ and Lemma~\ref{lem:young}. As $P_k$ solves~\eqref{eq:dare},
\begin{equation*}
    \|\hat{x}_{1|k}\|^2_{P_k} =  \|(\hat{A}_k+\hat{B}_kK_k)x_k\|^2_{P_k} = \|x_k\|^2_{P_k} - \|x_k\|^2_{Q+K_k^\top RK_k},
\end{equation*}
which results in the following upper bound for $(**)$:
\begin{align}
    V(x_{k+1},\hat{\theta}_k) - V(x_k,\hat{\theta}_k) &\leq \left(1+\frac{1}{\varepsilon_1}\right)\| e_{1|k} + w_k\|^2_{P_k} + \varepsilon_1\| x_k\|^2_{P_k}- (1+\varepsilon_1)\| x_k \|^2_{Q+K_k^\top RK_k} \nonumber \\
    &\stackrel{\eqref{eq:uniform_const}}{\leq}  - ((1+\varepsilon_1)\ubar{q} - \varepsilon_1\bar{p})\| x_k \|^2 + \left(1+\frac{1}{\varepsilon_1}\right)\bar{p}\| e_{1|k} + w_k\|^2\label{eq:**final}.
\end{align}
\par In order to bound $(*)$, note that $V(x,\theta)$ is the infinite-horizon cost-to-go corresponding to system $A(\theta),B(\theta)$ under controller $K_\mathrm{LQR}(\theta)$:
\begin{equation} \label{eq:infinite_cost_to_go}
    V(x,\theta) = x^\top P(\theta) x = \sum_{j=0}^\infty \| (x_j,K_\mathrm{LQR}(\theta)x_j)\|^2_{\tilde{Q}},
\end{equation}
where $x_{j+1} = (A(\theta)+B(\theta)K_\mathrm{LQR}(\theta))x_j,\ x_0 = x$. To express $V(x_{k+1},\hat{\theta}_{k+1})$ and $V(x_{k+1},\hat{\theta}_k)$ via~\eqref{eq:infinite_cost_to_go}, define the auxiliary systems $z_{i+1} = \hat{A}_{\mathrm{cl},k+1}z_i$ and $y_{i+1}=\hat{A}_{\mathrm{cl},k}y_i$ initialized at $z_0=y_0=x_{k+1}$, and let 
\begin{equation} \label{eq:xi_eta_def}
    \xi_i :=z_i-y_i, \qquad\eta_i :=K_{k+1}z_i - K_ky_i.
\end{equation}
Then,
\begin{align*}
    \eta_i &= K_{k+1}\xi_i + (K_{k+1}-K_k)y_i\\
    \xi_{i+1} &= \hat{A}_{\mathrm{cl},k+1}\xi_i + (\hat{A}_{\mathrm{cl},k+1}-\hat{A}_{\mathrm{cl},k})y_i,
\end{align*}
that is,
\begin{align*}
    \xi_j &= \sum_{i=0}^{j-1}\hat{A}_{\mathrm{cl},k+1}^i(\hat{A}_{\mathrm{cl},k+1}-\hat{A}_{\mathrm{cl},k})\hat{A}_{\mathrm{cl},k}^{j-1-i}x_{k+1}, \\
    \eta_j &= (K_{k+1}-K_k)\hat{A}_{\mathrm{cl},k}^jx_{k+1} + K_{k+1}\sum_{i=0}^{j-1}\hat{A}_{\mathrm{cl},k+1}^i(\hat{A}_{\mathrm{cl},k+1}-\hat{A}_{\mathrm{cl},k})\hat{A}_{\mathrm{cl},k}^{j-1-i}x_{k+1},
\end{align*}
implying
\begin{align}
    \| \xi_j \| &\leq \|\hat{A}_{\mathrm{cl},k+1}-\hat{A}_{\mathrm{cl},k}\|\sum_{i=0}^{j-1}\|\hat{A}_{\mathrm{cl},k+1}^i\|\|\hat{A}_{\mathrm{cl},k}^{j-1-i}\| \|x_{k+1}\|, \label{eq:xi_bound_1}\\
    \| \eta_j \| &\leq \|K_{k+1}-K_k\|\|\hat{A}_{\mathrm{cl},k}^j\|\|x_{k+1}\|  \label{eq:eta_bound_1}\\
    &\quad +\|K_{k+1}\|\|\hat{A}_{\mathrm{cl},k+1}-\hat{A}_{\mathrm{cl},k}\|\sum_{i=0}^{j-1}\|\hat{A}_{\mathrm{cl},k+1}^i\|\|\hat{A}_{\mathrm{cl},k}^{j-1-i}\|\|x_{k+1}\|, \nonumber
\end{align}
Since $\hat{A}_{\mathrm{cl},k}$ and $\hat{A}_{\mathrm{cl},k+1}$ are Schur stable, there exist $C>0$ and $0\leq\rho<1$ such that
\begin{equation*}
    \| \hat{A}_{\mathrm{cl},k}^i\| \leq C\rho^i,\quad \| \hat{A}_{\mathrm{cl},k+1}^i\| \leq C\rho^i,\quad \forall i\geq0, \quad \forall\hat{\theta}_k \in \Theta,
\end{equation*}
resulting in
\begin{equation*}
    \sum_{i=0}^{j-1}\|\hat{A}_{\mathrm{cl},k+1}^i\|\|\hat{A}_{\mathrm{cl},k}^{j-1-i}\| \leq \sum_{i=0}^{j-1}C^2\rho^{j-1} = C^2j\rho^{j-1}.
\end{equation*}
Consequently,
\begin{align}
    \| \xi_j \| &\stackrel{\eqref{eq:xi_bound_1}}{\leq} C^2\|\hat{A}_{\mathrm{cl},k+1}-\hat{A}_{\mathrm{cl},k}\|\|x_{k+1}\|j\rho^{j-1}, \label{eq:xi_bound_2}\\
    \| \eta_j \| &\stackrel{\eqref{eq:eta_bound_1}}{\leq} C^2\|K_{k+1}-K_k\|\|x_{k+1}\|\rho^j +C^2\|K_{k+1}\|\|\hat{A}_{\mathrm{cl},k+1}-\hat{A}_{\mathrm{cl},k}\|\|x_{k+1}\|j\rho^{j-1}. \label{eq:eta_bound_2}
\end{align}
Note that
\begin{equation}\label{eq:xk+1_bound}
    \|x_{k+1}\|^2  = \| (A_k+B_kK_k)x_k + w_k\|^2 \stackrel{\text{Lemma \ref{lem:young}}}{\leq} 2\|(A_k+B_kK_k)x_k \|^2 + 2\|w_k\|^2 \leq2\bar{a}^2X^2+2\bar{W}^2,
\end{equation}
where the last inequality is due to~\eqref{eq:uniform_const} and the assumption that $\|x_k\| \leq X$. Taking the squares of~\eqref{eq:xi_bound_2}-\eqref{eq:eta_bound_2}, and using $\rho^j\leq \rho^{j-1}$, Lemma~\ref{lem:lqr_lipschitz}, and~\eqref{eq:uniform_const},~\eqref{eq:Acl_lipschitz}, and~\eqref{eq:xk+1_bound} results in
\begin{align}
\label{eq:xi_eta_bound_3}
    \| \xi_j \|^2 &\leq 2C^2(\bar{a}^2X^2 +\bar{W}^2)L_A^2j^2(\rho^2)^{j-1}\|\hat{\theta}_{k+1}-\hat{\theta}_k\|^2 ,\\
    \| \eta_j \|^2 &\leq 2C^2(\bar{a}^2X^2 +\bar{W}^2)\left(L_K^2(\rho^2)^j + 2\bar{K}L_KL_Aj(\rho^2)^{j-1}+ \bar{K}^2L_A^2j^2(\rho^2)^{j-1}\right)\|\hat{\theta}_{k+1}-\hat{\theta}_k\|^2. \nonumber
\end{align}
From~\eqref{eq:infinite_cost_to_go}, it can be seen that $\forall \varepsilon_2>0$
\begin{align}
    V(x_{k+1},\hat{\theta}_{k+1}) - V(x_{k+1},\hat{\theta}_k) &= \sum_{j=0}^\infty \left(\| (z_j,K_{k+1}z_j)\|^2_{\tilde{Q}}-\| (y_j,K_ky_j))\|^2_{\tilde{Q}}\right) \nonumber \\
     &\stackrel{\eqref{eq:xi_eta_def}}{=} \sum_{j=0}^\infty \left( \| (y_j+\xi_j,K_ky_j+\eta_j)\|^2_{\tilde{Q}}-\| (y_j,K_ky_j))\|^2_{\tilde{Q}} \right) \nonumber \\
     & \stackrel{\text{Lemma \ref{lem:young}}}{\leq}\sum_{j=0}^\infty \left(\varepsilon_2\| (y_j,K_ky_j))\|^2_{\tilde{Q}} + \left(1+\frac{1}{\varepsilon_2}\right)\| (\xi_j,\eta_j)\|^2_{\tilde{Q}} \right) \nonumber \\
    &= \varepsilon_2\| x_{k+1}\|^2_{P_k} + \left(1+\frac{1}{\varepsilon_2}\right)\sum_{j=0}^\infty(\|\xi_j\|^2_Q + \| \eta_j\|_R^2 ) \nonumber \\
    &\stackrel{\eqref{eq:uniform_const}}{\leq} \varepsilon_2\bar{p}\| x_{k+1}\|^2 + \left(1+\frac{1}{\varepsilon_2}\right)\sum_{j=0}^\infty(q\|\xi_j\|^2 + r\| \eta_j\|^2 ). \label{eq:V_diff_intermediate}
\end{align}
Combining~\eqref{eq:xi_eta_bound_3} with the infinite sums~\eqref{eq:polylog} yields:
\begin{align}
\begin{split} \label{eq:xi_eta_bound_4}
    \sum_{j=0}^\infty\|\xi_j\|^2 &\leq 2C^2(\bar{a}^2X^2 + \bar{W}^2)\|\hat{\theta}_{k+1}-\hat{\theta}_k\|^2 L_A^2\frac{1+\rho^2}{(1-\rho^2)^3}, \\
    \sum_{j=0}^\infty\|\eta_j\|^2 &\leq 2C^2(\bar{a}^2X^2 + \bar{W}^2)\|\hat{\theta}_{k+1}-\hat{\theta}_k\|^2 \left(\frac{L_K^2}{1-\rho^2} + \frac{2\bar{K}L_KL_A}{(1-\rho^2)^2} +\bar{K}^2L_A^2 \frac{1+\rho^2}{(1-\rho^2)^3}\right).
\end{split}
\end{align}
Applying Lemma~\ref{lem:young} with $\varepsilon_3>0$ on $\|x_{k+1}\|^2 = \|\hat{x}_{1|k} + e_{1|k} + w_k\|^2$ gives
\begin{equation} \label{eq:varepsilon3}
    \|x_{k+1}\|^2 \leq \left(1 + \frac{1}{\varepsilon_3}\right)\|e_{1|k} + w_k\|^2 + (1 + \varepsilon_3)\bar{a}^2\|x_k\|^2.
\end{equation}
Finally, combining~\eqref{eq:V_diff_intermediate},~\eqref{eq:xi_eta_bound_4},~\eqref{eq:varepsilon3}, and~\eqref{eq:hat_bound} results in
\begin{align} \label{eq:*final}
    V(x_{k+1}&,\hat{\theta}_{k+1}) - V(x_{k+1},\hat{\theta}_k)  \\ \leq &\left(\left(1+\frac{1}{\varepsilon_2}\right)\mu (C_1X^2+C_2\bar{W}^2) + \varepsilon_2\left(1+\frac{1}{\varepsilon_3}\right)\bar{p}\right)\| e_{1|k} + w_k\|^2 + \varepsilon_2(1+\varepsilon_3) \bar{a}^2\bar{p} \| x_k\|^2, \nonumber
\end{align}
for some $C_1,C_2>0$.

\par Summing~\eqref{eq:*final} and~\eqref{eq:**final} leads to
\begin{align} 
\label{eq:prop2_final}
    V(x_{k+1},\hat{\theta}_{k+1}) &- V(x_k,\hat{\theta}_k) \leq -((1+\varepsilon_1)\ubar{q}-\varepsilon_2(1+\varepsilon_3) \bar{a}^2\bar{p} -\varepsilon_1\bar{p}) \| x_k\|^2 \\ &+ \left(\left(1+\frac{1}{\varepsilon_2}\right) \mu (C_1X^2 +C_2\bar{W}^2)+ \left(\varepsilon_2\left(1+\frac{1}{\varepsilon_3}\right) + 1+\frac{1}{\varepsilon_1}\right)\bar{p}\right)\| e_{1|k} + w_k\|^2, \nonumber
\end{align}
$\forall \varepsilon_{1,2,3}>0$. By choosing $\varepsilon_{1,2,3}>0$ small enough and once again using Lemma~\ref{lem:young} on $\| e_{1|k}+w_k\|^2$, we obtain $\alpha,\beta,\gamma>0$ satisfying the Lyapunov decrease condition~\eqref{eq:lyap_lqr},
where $\beta$ is of the form
\begin{equation} \label{eq:beta_form}
    \beta = \nu_2\mu X^2 + \nu_3\mu + \nu_4
\end{equation}
with $\nu_{2,3,4} > 0$ constants independent of $X$ and $\mu$.
\end{proof}

\subsection{Proof of Theorem~\ref{thm}} \label{app:thm}
\begin{proof}

    First, we address item~\ref{enum:lyapunov} of Theorem~\ref{thm} by showing that the candidate Lyapunov function~\eqref{eq:V_tilde_def} satisfies~\eqref{eq:aug_lyap_bound}-\eqref{eq:aug_lyap_decrease} for a given $k$, assuming $\|x_k\| \leq X$ for some $X \geq 1$. Then, we show that sublevel sets of the Lyapunov function are invariant for the joint state $(x_k,\varphi_k)$. Lastly, we prove that if $d>0$ (the diameter of $\Theta$) is small enough, there is a sufficiently large $X>0$ such that containment in the invariant set implies $\|x_{k+1}\| \leq X$, proving items~\ref{enum:state_bound}-\ref{enum:lyapunov} of Theorem~\ref{thm}. Finally, we prove item~\ref{enum:l2}.

    \par If $\|x_k\| \leq X$, then $\| u_k\| = \| K(\hat{\theta}_k)x_k\|$ is also bounded, since $\hat{\theta}_k \in \Theta$. Thus, by selecting $\mu = \nu_1/X^2$, where $\nu_1>0$ is some constant depending on the known functions $A(\theta)$ and $B(\theta)$, the step size $\mu>0$ satisfies~\eqref{eq:mu_cond}, ensuring that the conditions in Propositions~\ref{prop:lms} and~\ref{prop:lqr} hold for $k$, resulting in inequalities~\eqref{eq:lyap_phi} and \eqref{eq:lyap_lqr} for $k$.
    Multiplying~\eqref{eq:lyap_phi} by $\beta/\mu > 0$ and adding it to~\eqref{eq:lyap_lqr} yields
    \begin{align}
    \label{eq:V_tilde_decrease}
        \tilde{V}(x_{k+1},\varphi_{k+1},k+1) - \tilde{V}(x_k,\varphi_k,k) \leq-\alpha \| x_k\|^2 + (\beta + \gamma) \| w_k\|^2 + \frac{2d\beta}{\mu}\| \Delta\theta_k \|,
    \end{align}proving~\eqref{eq:aug_lyap_decrease} with $\tilde{\alpha} = \alpha$, $\tilde{\gamma} = \gamma+\beta$, $\tilde{\delta} = 2d\beta/\mu$. It can also be seen that $\tilde{V}(x_k,\varphi_k,k)$ satisfies the uniform lower and upper bounds~\eqref{eq:aug_lyap_bound}:
    \begin{equation} \label{eq:uniform_bounds}
        \ubar{q}\| x_k\|^2 + \frac{\beta}{\mu}\| \varphi_k\|^2 \leq \tilde{V}(x_k,\varphi_k,k) \leq \bar{p}\| x_k\|^2 + \frac{\beta}{\mu}\| \varphi_k\|^2.
    \end{equation}
    \par Define the time-varying sublevel set of $\tilde{V}(x_k,\varphi_k,k)$ as
    \begin{equation*}
         \mathcal{S}_k := \{\ (x_k,\varphi_k)\ |\ \tilde{V}(x_k,\varphi_k,k) \leq b\ \},
    \end{equation*}
    where
    \begin{equation*}
        b := \max \left\{\bar{p}\| x_0\|^2; \left(\frac{\bar{p}}{\alpha}+1\right)\left((\gamma + \beta)\bar{W}^2 + 2\frac{\beta}{\mu}d^2\right)\right\} + \frac{\beta}{\mu}d^2,
    \end{equation*}
    ensuring that $(x_0,\varphi_0) \in \mathcal{S}_0$.
    We use case distinction to show that $(x_k,\varphi_k) \in \mathcal{S}_k \implies (x_{k+1},\varphi_{k+1}) \in \mathcal{S}_{k+1}$:
    \begin{enumerate}[leftmargin=2.5\parindent]
        \item[Case 1:] if $\|x_k\|^2 \geq (\gamma + \beta)\frac{\bar{W}^2}{\alpha} + \frac{2\beta d^2}{\mu\alpha}$, then, noting that $\|\Delta\theta_k\| = \| \theta_{k+1} - \theta_k\| \leq d$, it holds that
        \begin{equation*}
            \tilde{V}(x_{k+1},\varphi_{k+1},k+1) - \tilde{V}(x_k,\varphi_k,k) \stackrel{\eqref{eq:V_tilde_decrease}}{\leq} -\alpha \| x_k\|^2 + (\beta + \gamma) \| w_k\|^2 + \frac{2d\beta}{\mu}\| \Delta\theta_k \| \leq 0,
        \end{equation*}
        implying $\tilde{V}(x_{k+1},\varphi_{k+1},k+1) \leq \tilde{V}(x_k,\varphi_k,k) \leq b$, i.e., $(x_{k+1},\varphi_{k+1}) \in \mathcal{S}_{k+1}$.
        \item[Case 2:] if $\|x_k\|^2 < (\gamma + \beta)\frac{\bar{W}^2}{\alpha} + \frac{2\beta d^2}{\mu\alpha}$, then
        \begin{align*}
            \tilde{V}(x_{k+1},\varphi_{k+1},k+1) &\stackrel{\eqref{eq:V_tilde_decrease}}{\leq} \tilde{V}(x_k,\varphi_k,k) -\alpha \| x_k\|^2 + (\beta + \gamma) \| w_k\|^2 + \frac{2d\beta}{\mu}\| \Delta\theta_k \| \\
            &\stackrel{\eqref{eq:uniform_bounds}}{\leq} \bar{p}\| x_k\|^2 + \frac{\beta}{\mu}d^2 + (\beta + \gamma) \bar{W}^2 + \frac{2\beta }{\mu}d^2 \\
            &< \bar{p}((\gamma + \beta)\frac{\bar{W}^2}{\alpha} + \frac{2\beta }{\mu\alpha}d^2) + \frac{\beta}{\mu}d^2 + (\beta + \gamma) \bar{W}^2 + \frac{2\beta }{\mu}d^2 \\
            &= \left(\frac{\bar{p}}{\alpha}+1\right)\left((\gamma + \beta)\bar{W}^2 + 2\frac{\beta}{\mu}d^2\right) + \frac{\beta}{\mu}d^2 \leq b,
        \end{align*}
        i.e., $(x_{k+1},\varphi_{k+1}) \in \mathcal{S}_{k+1}$.
    \end{enumerate}
    That is $(x_k,\varphi_k) \in \mathcal{S}_k \implies (x_{k+1},\varphi_{k+1}) \in \mathcal{S}_{k+1}$, further implying $\| x_{k+1}\|^2 \leq b/\ubar{q}$ due to~\eqref{eq:uniform_bounds}.
    Note that in order for Proposition~\ref{prop:lqr} to hold, $\beta$ needs to satisfy $\beta \geq\nu_2\mu X^2 + \nu_3\mu + \nu_4$ with $\nu_{2,3,4} > 0$ (see~\eqref{eq:beta_form}). Since $\mu = \nu_1/X^2$ and $X \geq 1$, this means that $\beta = \nu_1(\nu_2 + \nu_3) + \nu_4 =: \nu_5$ is a valid choice, in which case $\beta/\mu = \nu_5X^2/\nu_1$. As $\alpha$ is independent of $X$, $\mu$, and $d$, and we are free to select $\gamma = \beta$, $b$ is of the form 
    \begin{equation*}
        b = \max \{ \bar{p} \|x_0\|^2; \nu_6 + \nu_7d^2X^2\} + \nu_8 d^2 X^2,
    \end{equation*}
    with constants $\nu_{6,7,8}>0$ independent of $d$ and $X$. 
    Consequently, to ensure that $(x_k,\varphi_k) \in \mathcal{S}_k$ implies $\|x_k\| \leq X$, it must hold that
        \begin{equation*}
        \ubar{q}X^2 \geq b =\max \{ \bar{p} \|x_0\|^2; \nu_6 + \nu_7d^2X^2\} + \nu_8 d^2 X^2,
    \end{equation*}
    which has a solution $X\geq 1$ as long as $d$ is small enough.
    \par In summary, for sufficiently small $d$, there exists a large enough choice for $X$ such that $(x_k,\varphi_k) \in \mathcal{S}_k$ implies both $\|x_k\| \leq X$ and $(x_{k+1},\varphi_{k+1}) \in \mathcal{S}_{k+1}$. As $(x_0,\varphi_0) \in \mathcal{S}_0$ due to~\eqref{eq:uniform_bounds}, we have proven item~\ref{enum:lyapunov} of Theorem~\ref{thm} by induction. Additionally, $\|x_k\| \leq X,\ \forall k \in \mathbb{N}$ implies $\|u_k\| \leq U,\ \forall k \in \mathbb{N}$ for some $U>0$, proving item~\ref{enum:state_bound}. Finally, regarding item~\ref{enum:l2}, bound~\eqref{eq:x_l2} can be proven by summing up~\eqref{eq:aug_lyap_decrease} from $k=0$ to $k=T-1$, simplifying the resulting telescopic sum, and using bounds~\eqref{eq:aug_lyap_bound} on $\tilde{V}(x_T,\varphi_T,T)$ and $\tilde{V}(x_0,\varphi_0,0)$.
\end{proof}

\acks{This work was supported as a part of NCCR Automation, a National Centre of Competence in Research, funded by the Swiss National Science Foundation (grant number 51NF40\textunderscore225155).}

\bibliography{bibliography}

\end{document}